\tikzset{
    middlearrow/.style n args={3}{
        draw,
        decoration={
            markings,
            mark=at position 0.5 with {
                \arrow[scale=2]{#1};
                \path[#2] node {$#3$};
            },
        },
        postaction=decorate
    }
}
\newtheorem{theorem}{Theorem}[section]
\newtheorem{lemma}[theorem]{Lemma}
\theoremstyle{definition}
\theoremstyle{remark}
\begin{document}

\begin{center}
{\large{\bf Learning coordination through new actions}}\\
\mbox{} \\
\begin{tabular}{c}
{\bf Sofia B.\ S.\ D.\ Castro$^{\dagger}$} \\
{\small (sdcastro@fep.up.pt)} 
\end{tabular}

\end{center}

\noindent $^{\dagger}$ Faculdade de Economia, Centro de Economia e Finan\c{c}as, and Centro de Matem\'atica, Universidade do Porto, Rua Dr.\ Roberto Frias, 4200-464 Porto, Portugal; fax: +351 225 505 050; phone: +351 225 571 100.

\begin{abstract}
We provide a novel approach to achieving a desired outcome in a coordination game: the original $2\times 2$ game is embedded in a $2\times 3$ game where one of the players may use a third action. For a large set of payoff values only one of the Nash equilibria of the original $2\times 2$ game is stable under replicator dynamics. We show that this Nash equilibrium is the $\omega$-limit of all initial conditions in the interior of the state space for the modified $2\times 3$ game. Thus, the existence of a third action for one of the players, although not used, allows both players to coordinate on one Nash equilibrium. This Nash equilibrium is the one preferred by, at least, the player with access to the new action. 

This approach deals with both coordination failure (players choose the payoff-dominant Nash equilibrium, if such a Nash equilibrium exists) and miscoordination (players do not use mixed strategies). 
\end{abstract}

\noindent {\em Keywords:} coordination game; replicator dynamics; coordination failure; miscoordination

\medskip

\noindent {\em JEL code:} C02, C72, C73
\vspace{.3cm}

\section{Introduction}\label{sec:intro}
It is widely accepted that coordination is difficult to achieve in a coordination game. 
Classic examples include the Stag Hunt and the Battle of the Sexes games which we use as illustrations. Let the actions in the Stag Hunt game be to hunt ``Stag'' or ``Hare'', in this order. In the Battle of the Sexes game let the available actions be to go to ``Football'' or the ``Opera'', in this order. The example payoff matrices for these games are represented in Table~\ref{tb:games}.

\begin{table}
\begin{tabular}{c|c|c|}
 & Hare & Stag \\
 \hline 
Hare & $5$, $5$ & $8$, $1$ \\
\hline 
Stag & $1$, $8$ & $10$, $10$ \\
\hline
\end{tabular}
\hspace{4cm}
\begin{tabular}{c|c|c|}
 & Football & Opera \\
 \hline
 Football & $10$, $7$ & $0$, $0$ \\
 \hline
 Opera & $0$, $0$ & $7$, $10$\\
 \hline
\end{tabular}
\caption{The payoff matrices for the games of Stag Hunt (left) and Battle of the Sexes (right). As usual, player $A$ is the row player and player $B$ the column player. Payoffs to player $A$ appear first and to player $B$, second. \label{tb:games}}
\end{table}
But what if one, and only one, of the players could choose a third action, say ``Partridge'' in the Stag Hunt game and ``Cinema'' in the Battle of the Sexes? We show that the player with the additional action can force coordination on their preferred pure equilibrium of the original game. Our approach is fundamentally different than those in the literature.

Typically, a coordination game has multiple Nash equilibria leading to a problem of choice. This is particularly relevant when one of the equilibria provides a better payoff while another is less risky in terms of loss of payoff. The former equilibrium is called {\em payoff-dominant} while the latter is called {\em risk-dominant}. The choice of the actions leading to the risk-dominant equilibrium, providing a smaller payoff, is often referred to as a {\em coordination failure} \cite{Kets_etal2022}\footnote{There are naturally many more references than is reasonable to provide in a single article. The present choice does not reflect any judgement on the references not included.} and many different approaches have been taken to avoid coordination failure. One severe obstacle to achieving coordination on the payoff-dominant outcome is that, as Weidenholzer \cite{Wei2010} clearly states ``no equilibrium refinement concept can discard a strict Nash equilibrium''.

In experiments, Brandts and Cooper \cite{BraCoo2006} use incentives as coordinating devices, while Battalio {\em et al.} \cite{Battalio_etal2001} introduce an ``optimization premium'' and show that a larger optimization premium promotes coordination on the risk-dominant equilibrium (thus coordination failure persists). Also in the context of experiments, Blume and Ortmann \cite{BluOrt2007} show that {\em cheap talk}, that is, costless pre-play comunication, facilitates coordination. Cooper {\em et al.} \cite{Cooper_etal1992} distinguish one- and two-way communication to show that pre-play communication can lead to coordination but its effect varies with the type of game. Another type of cheap-talk, free-form communication, is addressed by Dugar and Shahriar \cite{DugSha2018} to show that if cheap-talk allows for an explanation of the players' choice then it enhances coordination. In the context of learning dynamics, Zhang and Hofbauer \cite{ZhaHof2016} use the notion of a ``quantal response equilibrium'' to show that under replicator dynamics, players may choose to cooperate if there is a strong enough punishment. Peer-punishment, when heterogenous, adds the issue of the effect of the composition of the group of players. Albrecht and Kube \cite{AlbKub2018} show that peer-punishment is indeed heterogeneous in a coordination game.

Other, mostly experimental alternatives, include the introduction of advice from a non-player or a delay. The former only helps in achieving coordination if the players believe the advice not to be self-interested as reported by Kuang {\em et al.} \cite{Kuang_etal2007}. The effect of a delay is, according to Jin {\em et al.} \cite{Jin_etal2023}, that of signalling that the player who is delaying their action is willing to choose the pay-off dominant outcome when playing next.

Even so the conclusion that ``payoff-dominant equilibrium is an extremely unlikely outcome, either initially or in repeated play'' is put forward by Van Huyck {\em et al.} \cite{VanHuyck_etal1990} after studying several equilibrium refinements in experiments.

Some authors attribute the cause of coordination failure to strategic uncertainty or the lack of common knowledge. Strategic uncertainty enhances risk and promotes coordination failure in experiments by Dal B\'o {\em et al.} \cite{DalBo_etal2021}. Riechmann and Weimann \cite{RieWei2008} claim that strategic uncertainty can be minimized by the introduction of a mechanism of competition that, according to experimental data, leads to coordination. In turn, Myatt {\em et al.} \cite{Myatt_etal2002} recognize common knowledge as a stringent assumption and propose the use of a global game, characterized by incomplete information with a type space determined from a noisy signal, for which there is a unique Nash equilibrium.
The existence of unequal payoffs for the Pareto superior equilibrium is also mentioned as a cause for coordination failure, especially when the difference between payoffs increase. This was tested by Chmura {\em et al.} \cite{Chmura_etal2005}.

In coordination games, any of the pure strategy Nash equilibria is sometimes replaced by the choice of a Nash equilibrium using mixed strategies. This is referred to as {\em miscoordination} and is common when the players are allowed to learn by playing over time. Under some form of fictitious play this is shown to be the case by both Fudenberg and Kreps \cite{FudKre1993} and Ellison and Fudenberg \cite{EllFud2000}.

\bigbreak

The present article proposes a new way of achieving coordination in the payoff-dominant equilibrium by using learning in the form of replicator dynamics and the addition of a new action for only one of the players. 
Solving a coordination problem between two agents by providing one of them with an additional action may seem like cheating. However, agents learn not only by improving their use of the originally available actions but also by resorting to additional actions. It is this latter form that is addressed in this article.

The addition of an action has been used by He and Wu \cite{HeWu2020} where the new action represents a compromise, and by Heifetz {\em et al.} \cite{Heifetz_etal2013} to allow one of the players to achieve their most desirable outcome. The elimination of the use of an action has also been considered by Flesch {\em et al.} \cite{Flesch_etal2011}. In all instances the new action is available to all the players and thus totally differs from the present approach. Note that the heterogeneity of players as well as of the environment make it natural that new actions may not be available to all players.

The use of an additional action may not always be feasible, at least not within a reasonable time frame. This is the case if an additional action requires building of infrastructure such as roads or bridges. Nevertheless, it is often the case that an additional action is readily available: in the Battle of the Sexes, a third type of entertainment leading to a third action is surely available.

The work that is presented here uses concepts of dynamical systems 
to establish that if one of the players in a $2 \times 2$ coordination game can use a third action, then there exists an open set of payoffs for this new action that make the most favourite action of this player the only stable outcome of the game, under replicator dynamics.

To be precise, consider a $2 \times 2$ coordination game where players are $A$ and $B$ with available actions, respectively, $\{ A_1, A_2 \}$ and $\{ B_1, B_2 \}$. 
Unless the game is of pure coordination, we assume that the payoff at each Nash equilibrium is different for both players. Then each player has a favourite\footnote{In pure coordination games our approach is still valid if, for some reason external to the game, one of the outcomes is preferred by one of the players.} Nash equilibrium in the sense that their payoff is bigger at one of the pure strategy Nash equilibria. In the Battle of the Sexes players have different favourite outcomes whereas in the Stag Hunt players have the same favourite. Considering that one of the players acquires the use of a third action, we describe the circumstances under which the favourite Nash equilibrium for this player is the outcome of the game. 

The chance to use an additional action may be imposed by an outside planner or may be the result of the deliberate effort of this one player. In the latter case, this is interpreted as learning on the part of this player. For the sake of argument, let player $B$ be the player who can use a third action\footnote{Analogous results can be obtained for player $A$, should they be the one with access to a third action.}, say $B_3$.
The circumstances conditioning  the favourite Nash equilibrium for player $B$ depend on the payoffs atributed to the outcomes where the new action is used, for both players. This allows for some strategic behaviour in the construction of these payoffs, either by the outside planner or by the player, which we do not address here. We show also that player $A$, while using only the two original actions, and although unable to determine the outcome of the replicator dynamics, has yet some strategic power: by perceiving a different payoff for the use of $A_1$ or $A_2$ when confronted with the new action $B_3$ of player $B$, player $A$ can prevent the preferred equilibrium from being unique, thus perpetuating the issue of coordination. These strategic considerations go beyond the game theoretic conclusions of this article. 

In the next section, we construct the modify the coordination game to allow one of the players the use of a third action. Section~\ref{sec:choice} establishes the existence of coordination in the modified game and the final section concludes.

\section{A modification of a coordination game}\label{sec:CoordinationGames}

Consider a $2 \times 2$ coordination game where players are $A$ and $B$ with available actions, respectively, $\{ A_1, A_2 \}$ and $\{ B_1, B_2 \}$. 
Denote by $a_{ij}$ the payoff received by player $A$ by using action $A_i$ against action $B_j$ of player $B$. Denote the payoff of this interaction received by player $B$ by $b_{ij}$. We have $a_{11}-a_{21}>0$, $a_{22}-a_{12}>0$, $b_{11}-b_{12}>0$, and $b_{22}-b_{21}>0$ to ensure that both $(A_1, B_1)$ and $(A_2, B_2)$ are pure strategy Nash equilibria. 
The matrix of payoffs is 
$$
\left( \begin{array}{cc}
a_{11},b_{11} & a_{12},b_{12} \\
\mbox{} & \mbox{} \\
a_{21},b_{21} & a_{22},b_{22}
\end{array} \right).
$$

Let player $B$ learn to use a new action, say $B_3$. The matrix of payoffs\footnote{We do not address ways of constructing particular payoffs in order to preserve generality.} becomes
$$
\left( \begin{array}{ccc}
a_{11},b_{11} & a_{12},b_{12} & a_{13},b_{13} \\
\mbox{} & \mbox{} \\
a_{21},b_{21} & a_{22},b_{22} & a_{23},b_{23}
\end{array} \right).
$$
Assume that either $b_{13}>b_{11}$ or $b_{23}>b_{22}$ so that player $B$ has some benefit from the use of the new action $B_3$.
If both $b_{13}>b_{11}$ and $b_{23}>b_{22}$ then there is only one Nash equilibrium for the one stage game which will consist in player $B$ using the new action $B_3$ and player $A$ choosing depending on the relative magnitude of $a_{13}$ and $a_{23}$. This is not relevant for our problem which is that of achieving coordination on one of the two initially available actions. 

Note that if $b_{13}>b_{11}$ then the $2 \times 3$ stage game has only one Nash equilibrium, $(A_2, B_2)$, provided $a_{23}>a_{13}$. Otherwise, an additional Nash equilibrium appears: $(A_1, B_3)$. 
On the other hand, if $b_{23}>b_{22}$ then either only $(A_1, B_1)$ is a Nash equilibrium of the $2 \times 3$ stage game provided $a_{23}<a_{13}$, or an additional Nash equilibrium appears : $(A_2, B_3)$. Recall that the existence of a unique Nash equilibrium for the stage game does not guarantee convergence to it when dynamics is introduced. The Rock-Scissors-Paper game provides perhaps the most famous example.

The replicator dynamics is governed by the following ODEs, where we write $a_{ijkl}=a_{ij}-a_{kl}$ as well as $b_{ijkl}=b_{ij}-b_{kl}$,

\begin{equation}
\left\{ 
\begin{aligned}
\dot{x} & = x(1-x)\left[ a_{1121}y_1 + a_{1222}y_2 + a_{1323}(1-y_1-y_2) \right]  \\
& \mbox{} \\
\dot{y}_1 & =y_1\left\{ \left[ b_{2123} - (b_{2123}+b_{1311})x \right](1-y_1) - \left[ b_{2223}- (b_{2223}+b_{1312})x \right]y_2 \right\}  \\
& \mbox{} \\
\dot{y}_2 & =y_2\left\{ \left[ (b_{2223} - (b_{2223} +b_{1312})x \right](1-y_2) - \left[ (b_{2123} - (b_{2123}+b_{1311})x \right]y_1 \right\} 
\end{aligned}
\right.
\label{eq:ODE}
\end{equation}

\medskip

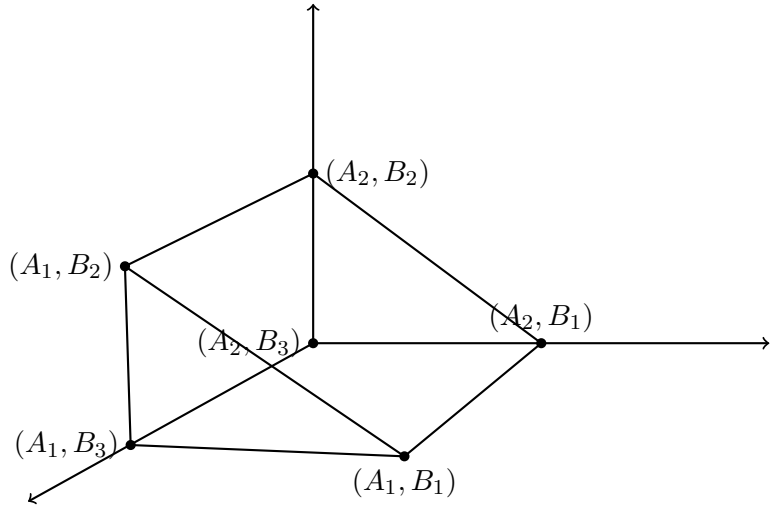
\begin{figure}[!htb]
\centering
\begin{tikzpicture}[>=latex,thick,scale=1.5]
\coordinate (1) at (0,0);
\coordinate (2) at (4,0);
\put (175,0,0) {$y_1$};
\coordinate (3) at (0,3);
\put (1,135,0) {$y_2$};
\coordinate (4) at (-2.5,-1.4);
\put (-115,-55) {$x$};
\coordinate (5) at (2,0);
\coordinate (6) at (0,1.5);
\coordinate (7) at (-1.6,-.9);
\coordinate (8) at (.8,-1);
\coordinate (9) at (-1.65,.68);
\path (1) edge[-to] (2);
\path (1) edge[-to] (3);
\path (1) edge[-to] (4);
\filldraw[black] (1) circle (1pt) node[left] {\small$(A_2,B_3)$};
\filldraw[black] (2,0) circle (1pt) node[above] {\small$(A_2,B_1)$};
\filldraw[black] (0,1.5) circle (1pt) node[right] {\small$(A_2,B_2)$};
\filldraw[black] (-1.6,-.9) circle (1pt) node[left] {\small$(A_1,B_3)$};
\filldraw[black] (.8,-1) circle (1pt) node[below] {\small$(A_1,B_1)$};
\filldraw[black] (-1.65,.68) circle (1pt) node[left] {\small$(A_1,B_2)$};
\path (5) edge[] (8);
\path (7) edge[] (8);
\path (5) edge[] (6);
\path (8) edge[] (9);
\path (7) edge[] (9);
\path (6) edge[] (9);
\end{tikzpicture}
 \caption{The state space of the modified game is the wedge contained in $[0,1]^3$. The vertices are all equilibria for the dynamics of \eqref{eq:ODE} and are labelled according to the action used by each player.}
\label{fig:StateSpace}
\end{figure}

The phase space is $S=\{ (x,y_1,y_2): \; \; x,y_1,y_2 \in [0,1]  \mbox{ and  } y_1+y_2 \leq 1 \}$. The corners of $S$ are equilibria for the dynamics. See Figure~\ref{fig:StateSpace}. We define the following flow invariant planes:
\begin{itemize}
	\item  $S_{y_1} = \{ (x,0,y_2) \} \subset S$, where player $B$ does not use the action $B_1$, 
	\item  $S_{y_2} = \{ (x,y_1,0) \} \subset S$, where player $B$ does not use the action $B_2$, and
	\item  $S_{y_3} = \{ (x,y_1,y_2): \; y_1+y_2=1 \} \subset S$, where player $B$ does not use the action $B_3$.
\end{itemize}

Note that the Nash equilibria for the original coordination game are $(A_1,B_1)=(1,1,0)$ and $(A_2,B_2)=(0,0,1)$. The mixed strategy Nash equilibrium for the original game is also present in the flow invariant space $S_{y_3}$ and is given by
$$
(x,y_1, 1-y_1) = (p, q, 1-q) = (\dfrac{b_{2221}}{b_{2221}+b_{1112}}, \dfrac{a_{2212}}{a_{2212}+a_{1121}},  \dfrac{a_{1121}}{a_{2212}+a_{1121}}).
$$
Up to one further equilibrium may be present, when player $B$ abstains from using either action $B_1$ or action $B_2$. These correspond to
\begin{itemize}
	\item  either an equilibrium in the flow invariant plane $S_{y_1}$ for $b_{2223}b_{1312}>0$ and $a_{1323}>0$ given by
	$$
	(x,0,y_2)=\left(\dfrac{b_{2223}}{b_{2223}+b_{1312}}, 0, \dfrac{a_{1323}}{a_{1323}+a_{2212}}\right);
	$$
	
	\item  or an equilibrium in the flow invariant plane $S_{y_2}$ for $b_{2123}b_{1311}>0$ and $a_{1323}<0$ given by
	$$
	(x,y_1,0)=\left(\dfrac{b_{2123}}{b_{2123}+b_{1311}}, \dfrac{a_{1323}}{a_{1323}+a_{2111}}, 0\right).
	$$
\end{itemize}

\begin{lemma}
The dynamics of \eqref{eq:ODE} have at least 7 and at most 8 equilibria. 
There are no equilibria in the interior of $S$.
\end{lemma}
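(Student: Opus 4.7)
The plan is to split the analysis into three stages: enumerate the rest points that are forced to exist, account for the one that may or may not appear on an auxiliary face, then rule out anything in the open interior of $S$.

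First, I would show that seven equilibria are always present. Direct substitution into \eqref{eq:ODE} confirms that at every vertex in Figure~\ref{fig:StateSpace} at least two of the factors $x$, $1-x$, $y_1$, $y_2$, $1-y_1-y_2$ vanish, so all three components of the vector field do; this yields the six vertex equilibria. On the flow-invariant face $S_{y_3}$ the system reduces to the replicator equations of the original $2\times 2$ coordination game, whose unique interior rest point is the mixed Nash equilibrium $(p,q,1-q)$ recorded in the display preceding the lemma. This gives the seventh.

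Next I would handle the possible eighth equilibrium by restricting attention to the remaining invariant faces $S_{y_1}$ and $S_{y_2}$. On each face the flow is two-dimensional and the nontrivial components of \eqref{eq:ODE} reduce to a linear system whose unique solution is precisely the candidate point listed just before the lemma. The two positivity requirements are $a_{1323}>0$ (on $S_{y_1}$) and $a_{1323}<0$ (on $S_{y_2}$); since these sign conditions are mutually exclusive, at most one of the two face equilibria actually belongs to $(0,1)^2$, bringing the total to at most eight.

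The decisive step is to exclude interior equilibria. Setting $u=b_{2123}-(b_{2123}+b_{1311})x$ and $v=b_{2223}-(b_{2223}+b_{1312})x$, I would cancel the factors $y_1$ and $y_2$ in the last two lines of \eqref{eq:ODE} to obtain
\[
u(1-y_1)=v\,y_2,\qquad v(1-y_2)=u\,y_1.
\]
Subtracting gives $u=v$, which after cancellation forces $x=b_{2221}/(b_{2221}+b_{1112})=p$. Substituting back then yields $u\,(1-y_1-y_2)=0$, so interiority (where $y_1+y_2<1$) requires $u(p)=v(p)=0$. This is the main obstacle: the coincidence $u(p)=v(p)=0$ is a single algebraic identity on the payoffs, and if it were allowed the entire plane $x=p$ intersected with $\{\dot x=0\}$ would consist of rest points, violating the bound of eight. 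I would therefore treat the lemma under the (implicit) genericity hypothesis $u(p)\ne 0$ that is consistent with a finite equilibrium count; under this hypothesis every equilibrium lies on the boundary of $S$, completing the proof.
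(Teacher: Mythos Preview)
Your argument follows the same route as the paper's Appendix~\ref{app:equilibria}: list the six vertices, pick up the mixed equilibrium on $S_{y_3}$, show that the sign of $a_{1323}$ makes the $S_{y_1}$ and $S_{y_2}$ face equilibria mutually exclusive, and then eliminate the interior from the last two lines of \eqref{eq:ODE}. There is, however, one case you never mention: the two triangular faces $\{x=0\}$ and $\{x=1\}$ with $y_1y_2\neq 0$. The paper checks these separately and uses the coordination inequalities $b_{22}>b_{21}$ and $b_{11}>b_{12}$ to rule out solutions there. In fact your own $u,v$ computation already does this for free --- the identity $u=v$ forces $x=p\in(0,1)$ without any assumption on $x$ --- so the fix is simply to say that your ``interior'' argument applies to every point with $y_1,y_2>0$ and $y_1+y_2<1$, not only to those with $0<x<1$.

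Your caveat about the genericity condition $u(p)\neq 0$ is correct and is actually sharper than the paper's own treatment: Appendix~\ref{app:equilibria} asserts that the last two equations are equivalent to $x=p$, $y_2=1-y_1$, but this equivalence fails precisely when $u(p)=0$, which in the paper's later notation is $Q=0$. When $Q=0$ the plane $x=p$ carries a whole segment of rest points and the bound of eight collapses, so the lemma implicitly needs this nondegeneracy, exactly as you say.
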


\begin{proof}
The proof is given in the detailed calculations in Appendix \ref{app:equilibria}. 
\end{proof}

The detail in Appendix \ref{app:equilibria} informs precisely about the values of the payoffs that allow for exactly 7 or exactly 8 equilibria in the dynamics of \eqref{eq:ODE}. The equilibria for the dynamics include the Nash equilibria determined above.

Table~\ref{tbl:eigen} provides the eigenvalues and eigenvectors of the Jacobian matrix for the ODEs \eqref{eq:ODE} at the corner equilibria. The detailed calculations are provided in Appendix \ref{app:Jacobian}.

\begin{table}
\begin{center}
\begin{tabular}{|c|c|c|}
\hline
equilibrium & eigenvalues & eigenvectors \\
\hline
  & $a_{13}-a_{23}$ & $(1,0,0)$ \\
$(0,0,0) = (A_2,B_3)$ & $b_{21}-b_{23}$ & $(0,1,0)$ \\
  & $b_{22}-b_{23}$ & $(0,0,1)$ \\
\hline
 & $-(a_{13}-a_{23})$ & $(1,0,0)$ \\
 $(1,0,0) = (A_1,B_3)$ & $b_{11}-b_{13}$ & $(0,1,0)$ \\
  & $b_{12}-b_{13}$ & $(0,0,1)$ \\
\hline
 & $a_{11}-a_{21} > 0$ & $(1,0,0)$ \\
 $(0,1,0) = (A_2,B_1)$ & $-(b_{21}-b_{23})$ & $(0,1,0)$ \\
  & $b_{22}-b_{21} > 0$ & $(0,1,-1)$ \\
\hline
 & $a_{12}-a_{22} < 0$ & $(1,0,0)$ \\
 $(0,0,1) = (A_2,B_2)$ & $-(b_{22}-b_{21}) < 0$ & $(0,1,-1)$ \\
  & $-(b_{22}-b_{23})$ & $(0,0,1)$ \\
\hline
 & $-(a_{11}-a_{21}) < 0$ & $(1,0,0)$ \\
 $(1,1,0) = (A_1,B_1)$ & $-(b_{11}-b_{13})$ & $(0,1,0)$ \\
  & $-(b_{11}-b_{12}) < 0$ & $(0,1,-1)$ \\
\hline
 & $-(a_{12}-a_{22}) > 0$ & $(1,0,0)$ \\
 $(1,0,1) = (A_1,B_2)$ & $b_{11}-b_{12} > 0$ & $(0,1,-1)$ \\
  & $-(b_{12}-b_{13})$ & $(0,0,1)$ \\
\hline
\end{tabular}
\end{center}
\caption{\small Information concerning the eigenvalues and eigenvectors of the Jacobian matrix for the ODEs \eqref{eq:ODE} at the corner equilibria. When the sign of the eigenvalues can be determined by the nature of the game, this is indicated.}
\label{tbl:eigen}
\end{table}

It is clear from Table~\ref{tbl:eigen} that at most one of the new  equilibria $(A_1,B_3)$ and $(A_2,B_3)$ is stable. It is also clear that the equilibria $(A_1, B_2)$ and $(A_2, B_1)$, also present in the original coordination game, are both unstable. And finally, the equilibria $(A_1, B_1)$ and $(A_2, B_2)$, both stable in the dynamics of the original coordination game, can now be independently either unstable (saddles) or stable (sinks). If these are both stable we replicate the need for an additional criterion for determining which equilibrium should be chosen. If however one, and only one of these is stable, then this may be the likely outcome of the modified game. This is addressed in Section~\ref{sec:choice}. Note that the original $2\times 2$ coordination game persists in $S_{y_3}$.

\subsection{Dynamics in flow-invariant subspaces}\label{sec:dyn-subspaces}

We now describe the dynamics in the flow invariant subspaces $S_{y_1}$, $S_{y_2}$, and $S_{y_3}$ of $S$ starting with the latter.

In $S_{y_3}$ the vector field \eqref{eq:ODE} becomes, dropping the index $1$ by writing $y\equiv y_1$, and $y_2=1-y$:
\begin{equation}
\left\{ 
\begin{aligned}
\dot{x} & = x(1-x)\left[ a_{1222} - (a_{1222}+a_{2111})y  \right]  \\
& \mbox{} \\
\dot{y} & =y(1-y) \left[ -b_{2221} + (b_{2221}+b_{1112})x \right] 
\end{aligned}
\right. .
\label{eq:Sy}
\end{equation}
These equations are as equations (2.1) in Hofbauer \cite{Hof1996} by identifying
$$
\begin{array}{ccc}
a=a_{1222}=a_{12}-a_{22} < 0 & \mbox{} & b=a_{2111}=a_{21}-a_{11} < 0 \\
c=b_{2221}=b_{22}-b_{21} > 0 & \mbox{} & d=b_{1112}=b_{11}-b_{12} > 0.
\end{array}
$$
Given the Hamiltonian 
$H(x,y)=c \log{x} + d \log {(1-x)} +a \log{y} + b \log{(1-y)}$ equations
\eqref{eq:Sy} can be written as
$$
\left\{ 
\begin{aligned}
\dot{x} & = P(x,y) \dfrac{\partial H}{\partial y}  \\
& \mbox{} \\
\dot{y} & = - P(x,y) \dfrac{\partial H}{\partial x}
\end{aligned}
\right. 
$$
for $P(x,y)=x(1-x)y(1-y)$. It is straightforward to see, given the signs of $a$, $b$, $c$, and $d$ above, that the mixed equilibrium $(p,q,1-q) \in S_y$ is a saddle. For initial conditions in this plane we recover the original coordination game.

\medskip

In $S_{y_1}$ the vector field \eqref{eq:ODE} becomes
$$
\left\{ 
\begin{aligned}
\dot{x} & = x(1-x)\left[ a_{1323} - (a_{1323}+a_{2212})y_2  \right]  \\
& \mbox{} \\
\dot{y}_2 & =y_2(1-y_2) \left[ -b_{2322} + (b_{2322}+b_{1213})x \right] 
\end{aligned}
\right. ,
$$
so that, a Hamiltonian function exists with
$$
\begin{array}{ccc}
a=a_{1323}=a_{13}-a_{23}  & \mbox{} & b=a_{2212}=a_{22}-a_{12}>0  \\
c=b_{2322}=b_{23}-b_{22}  & \mbox{} & d=b_{1213}=b_{12}-b_{13}.
\end{array}
$$
When an equilibrium exists in $S_{y_1}$ we have $a>0$ and $cd>0$. If $c,d>0$ then the equilibrium is a center, otherwise, it is a saddle. When the interior equilibrium in $S_{y_1}$ is a saddle there are two Nash equilibria on the boundary of $S_{y_1}$: $(A_1,B_1)$ and $(A_1,B_3)$. If no equilibrium exists in $S_{y_1}$ then the dynamics are determined by that along the edges that constitute the boundary of $S_{y_1}$.

Similar calculations show that the vector field  \eqref{eq:ODE} restricted to $S_{y_2}$ can be written in the same form but with
$$
\begin{array}{ccc}
a=a_{1323}=a_{13}-a_{23}  & \mbox{} & b=a_{2111}=a_{21}-a_{11}<0  \\
c=b_{2322}=b_{23}-b_{21}  & \mbox{} & d=b_{1113}=b_{11}-b_{13}.
\end{array}
$$
When an equilibrium exists in $S_{y_2}$ we have $a<0$ and $cd>0$. If $c,d<0$ then the equilibrium is a center, otherwise, it is a saddle. In this latter case, again we have two Nash equilibria on the boundary: $(A_2,B_2)$ and $(A_2,B_3)$. If no equilibrium exists in $S_{y_2}$ again the dynamics are determined by that along the edges that consitute the boundary of $S_{y_2}$.

The dynamics in the subspaces contained in the vertical planes given by $\{x=0\}$ and $\{x=1 \}$ is completely determined by the dynamics along the edges limiting these subspaces.

\section{Learning to coordinate}\label{sec:choice}

We examine how the existence of the new action $B_3$ for player $B$ can determine the outcome of the game (except for initial conditions in flow invariant subspaces, discussed above). 
In order to focus our analysis, we assume that $(A_2, B_2)$ is the payoff dominant equilibrium for player $B$. 

The study of the dynamics under any combination of the relative magnitude of the parameters is possible with the information provided in the appendices. However, our purpose is rather to prove the existence of certain reasonable choices of the payoffs that lead to the desirable dynamics. 

In order to achieve coordination on $(A_2, B_2)$, player $B$ learns to play a new action only if this ensures that $(A_2,B_2)$ is a stable Nash equilibrium. This means that $b_{22}>b_{23}$ to ensure stability (see Table~\ref{tbl:eigen}). 
Furthermore, player $B$ learns to play a new action if the payoff for this action is improved against at least one of the choices of player $A$. This corresponds to $b_{13}>b_{11}$. This means that player $B$ receives a higher payoff when using the new action $B_3$ against player $A$'s choice of $A_1$ but a lower payoff against $A_2$.
The only Nash equilibrium for the extended game is $(A_2, B_2)$ if we add the condition $a_{23}>a_{13}$.  Concerning player $A$, we are assuming that their choice of $A_2$ gives a better payoff than $A_1$ against $B_3$. This is a reasonable assumption if $(A_2, B_2)$ is payoff dominant for player $A$ as well, or it may be induced by a planner.

Since the existence of a unique Nash equilibrium which is stable is not sufficient to guarantee that there is convergence to this equilibrium, we prove that it is so under the above assumptions on the values of the payoffs. We formalise the assumptions as the following open condition:
\paragraph{Assumption (A)}
\begin{itemize}
	\item[(A1)]  $b_{22}>b_{23}$;
	\item[(A2)]  $b_{13}>b_{11}$;
	\item[(A3)]  $a_{23}>a_{13}$.
\end{itemize}
Note that (A3) ensures that no equilibrium exists in the interior of $S_{y_1}$. It follows from (A2) that an equilibrium exists in the interior of $S_{y_2}$ if and only if $b_{21}>b_{23}$.
See Figure~\ref{fig:PhaseDiagram_planes} for the dynamics in $S_{y_1} \cup S_{y_2}$ under Assumption (A). The phase diagram in $S_{y_3}$ is the usual for the $2 \times 2$ coordination game where a saddle in the interior of $S_{y_3}$ has an unstable manifold that connects to the pure equilibria $(A_1,B_1)$ and $(A_2,B_2)$. These are both stable for the dynamics restricted to $S_{y_3}$, recovering the original $2\times 2$ coordination game.

\begin{figure}
\centering
    \subfigure{
    \begin{tikzpicture}[xscale=1.1, yscale=1.1]
      \put (-60,40) {$S_{y_1}$}
      \put (45,-25) {$S_{y_2}$}
      \put (-80,-70) {\begin{tabular}{cl}
      $\mathbf{(a)}$ &
      $b_{21}-b_{23}>0$
      \end{tabular}};
\coordinate (1) at (0,0);
\coordinate (2) at (3,0);
\put (105,0,0) {$y_1$};
\coordinate (3) at (0,2.5);
\put (1,85,0) {$y_2$};
\coordinate (4) at (-2.5,-1.4);
\put (-90,-40) {$x$};
\coordinate (5) at (2,0);
\coordinate (6) at (0,1.5);
\coordinate (7) at (-1.6,-.9);
\coordinate (8) at (.65,-.9);
\coordinate (9) at (-1.65,.68);
\path (1) edge[-to] (2);
\path (1) edge[-to] (3);
\path (1) edge[-to] (4);
\filldraw[black] (1) circle (1pt);
\filldraw[black] (2,0) circle (1pt);
\filldraw[black] (0,1.5) circle (1pt) node[right] {\small$(A_2,B_2)$};
\filldraw[black] (-1.6,-.9) circle (1pt);
\filldraw[black] (.65,-.9) circle (1pt) node[below] {\small$(A_1,B_1)$};
\filldraw[black] (-1.65,.68) circle (1pt);
\path (5) edge[middlearrow={>}{left}{}] (8);
\path (1) edge[middlearrow={>}{left}{}] (5);
\path (8) edge[middlearrow={>}{left}{}] (7);
\path (7) edge[middlearrow={>}{left}{}] (1);
\path (9) edge[middlearrow={>}{left}{}] (7);
\path (9) edge[middlearrow={>}{left}{}] (6);
\path (1) edge[middlearrow={>}{left}{}] (6);
\path (9) edge[out=300,in=270,middlearrow={>}{}{}] (6);
\filldraw[black] (.3,-.45) circle (1pt);
\draw[->] (.3,-.15) arc (90:45:.4 and .3);
\draw(.3,-.45) ellipse (.4cm and .3cm);
  \end{tikzpicture}
    }
    \vspace{1cm}
    \subfigure{
    \begin{tikzpicture}[xscale=1.1, yscale=1.1]
      \put (-60,40) {$S_{y_1}$}
      \put (45,-25) {$S_{y_2}$}
      \put (-80,-70) {\begin{tabular}{cl}
      $\mathbf{(b)}$ &
      $b_{21}-b_{23}<0$
       \end{tabular}};
\coordinate (1) at (0,0);
\coordinate (2) at (3,0);
\put (105,0,0) {$y_1$};
\coordinate (3) at (0,2.5);
\put (1,85,0) {$y_2$};
\coordinate (4) at (-2.5,-1.4);
\put (-90,-40) {$x$};
\coordinate (5) at (2,0);
\coordinate (6) at (0,1.5);
\coordinate (7) at (-1.6,-.9);
\coordinate (8) at (.65,-.9);
\coordinate (9) at (-1.65,.68);
\path (1) edge[-to] (2);
\path (1) edge[-to] (3);
\path (1) edge[-to] (4);
\filldraw[black] (1) circle (1pt);
\filldraw[black] (2,0) circle (1pt);
\filldraw[black] (0,1.5) circle (1pt) node[right] {\small$(A_2,B_2)$};
\filldraw[black] (-1.6,-.9) circle (1pt);
\filldraw[black] (.65,-.9) circle (1pt) node[below] {\small$(A_1,B_1)$};
\filldraw[black] (-1.65,.68) circle (1pt);
\path (5) edge[middlearrow={>}{left}{}] (8);
\path (5) edge[middlearrow={>}{left}{}] (1);
\path (8) edge[middlearrow={>}{left}{}] (7);
\path (7) edge[middlearrow={>}{left}{}] (1);
\path (9) edge[middlearrow={>}{left}{}] (7);
\path (9) edge[middlearrow={>}{left}{}] (6);
\path (1) edge[middlearrow={>}{left}{}] (6);
\path (9) edge[out=300,in=270,middlearrow={>}{}{}] (6);
\path (5) edge[out=200,in=270,middlearrow={>}{}{}] (1);
  \end{tikzpicture}
    }
\caption{The phase diagram in $S_{y_1} \cup S_{y_2}$ under Assumption (A), depending on the sign of $b_{21}-b_{23}$.}
\label{fig:PhaseDiagram_planes}
\end{figure}
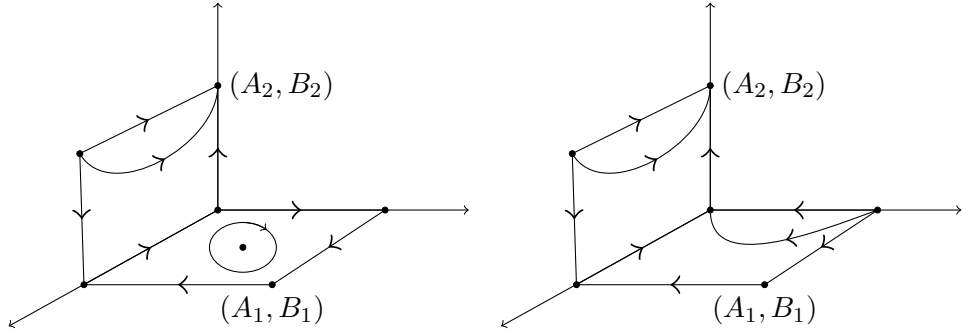

The next result establishes that the $\omega$-limit of any initial condition in the interior of $S$ is $(A_2,B_2)$, thus showing that by having an additional action at their disposal player $B$ can force their most desired Nash equilibrium in the original $2 \times 2$ game to be the outcome of the extended $2 \times 3$ game. The following quantity (see Appendix~\ref{app:Jacobian}) is useful:
$$
Q= (b_{22}-b_{21})(b_{11}-b_{13})+(b_{11}-b_{12})(b_{21}-b_{23}).
$$

\begin{theorem}\label{th:choice}
Suppose that Assumption (A) holds and that $Q<0$.
Then the $\omega$-limit of any initial condition in the interior of $S$ is $(A_2,B_2)$.
\end{theorem}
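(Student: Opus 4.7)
The plan is to argue that every $\omega$-limit of an interior orbit must lie on $\partial S$ (since the interior is equilibrium-free by the preceding lemma), that the boundary is transversely repelling everywhere except at $(A_2,B_2)$, and therefore that $(A_2,B_2)$ is the only possible $\omega$-limit. First I would extract from Table~\ref{tbl:eigen}, under Assumption~(A), that $(A_2,B_2)$ is the only corner sink: its three eigenvalues $a_{12}-a_{22}$, $-(b_{22}-b_{21})$ and $-(b_{22}-b_{23})$ are all negative, the last by (A1). Every other corner is either a source or a saddle whose stable manifold lies in $\partial S$; for example at $(A_1,B_1)$ the two stable eigenvectors $(1,0,0)$ and $(0,1,-1)$ span the tangent plane to $S_{y_3}$, so by flow-invariance the full stable manifold of $(A_1,B_1)$ sits inside $S_{y_3}$, and an analogous inspection of Table~\ref{tbl:eigen} handles the remaining five corners.

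Second, I would combine this with the face-by-face analysis of Section~\ref{sec:dyn-subspaces}. Assumption~(A3) forbids an interior equilibrium in $S_{y_1}$, so two-dimensional Poincar\'e--Bendixson drives every non-vertex orbit of that face to $(A_2,B_2)$. In $S_{y_3}$ the original coordination-game dynamics persist, with an interior saddle and attractors $(A_1,B_1)$, $(A_2,B_2)$; but $S_{y_3}$ is exactly the stable manifold of $(A_1,B_1)$ in the three-dimensional flow, so this contributes nothing new. On $S_{y_2}$ either $b_{21}<b_{23}$ and every orbit is swept to $(A_2,B_3)$, which is transversely unstable because $b_{22}-b_{23}>0$ by (A1), or $b_{21}>b_{23}$ and a Hamiltonian center with a one-parameter family of periodic orbits fills the interior of $S_{y_2}$. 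The problem reduces to showing that $S_{y_2}$, together with every periodic orbit in it, is transversely repelling for the three-dimensional flow.

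This is where $Q<0$ enters. A short computation in the spirit of Appendix~\ref{app:Jacobian} gives the eigenvalue transverse to $S_{y_2}$ at the center $(x^*,y_1^*,0)$ as $-Q/(b_{2123}+b_{1311})$; when $b_{21}>b_{23}$, Assumption~(A2) makes both $b_{2123}$ and $b_{1311}$ positive, so the sign of this eigenvalue is the sign of $-Q$ and $Q<0$ forces transverse instability at the center. For the surrounding periodic orbits the relevant quantity is the Floquet exponent in the $y_2$ direction, namely the orbit-average of
\[
F(x,y_1,0)=b_{2223}-(b_{2223}+b_{1312})x-\bigl[b_{2123}-(b_{2123}+b_{1311})x\bigr]y_1.
\]
Writing $F(x,y_1,0)-F(x^*,y_1^*,0)=(x-x^*)\bigl[(b_{2123}+b_{1311})y_1-(b_{2223}+b_{1312})\bigr]$ and exploiting the Hamiltonian structure of the planar flow on $S_{y_2}$ (Section~\ref{sec:dyn-subspaces}), the deviation from the central value averages in a controlled way and the total Floquet exponent remains strictly positive on the whole family. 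An average Lyapunov function of Hofbauer type then rules out $S_{y_2}$ as the $\omega$-limit of any interior orbit.

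What remains is packaging. The set $\omega(p)$ for $p$ in the interior of $S$ is a compact, connected, invariant subset of $\bar S$ that contains no interior equilibrium (by the lemma) and, by the transverse-repulsion statements above combined with Butler--McGehee, meets $\partial S$ only at $(A_2,B_2)$. Connectedness then forces $\omega(p)=\{(A_2,B_2)\}$. The step I expect to be the main obstacle is the third one, namely upgrading $Q<0$ from a statement about the transverse eigenvalue at the single equilibrium $(x^*,y_1^*,0)$ to positivity of the transverse Floquet exponent along the entire continuous family of periodic orbits in $S_{y_2}$ in the sub-case $b_{21}>b_{23}$, and checking the hypothesis of the average Lyapunov function; the other steps are a routine combination of planar Poincar\'e--Bendixson, eigenvalue reading from Table~\ref{tbl:eigen}, and the standard boundary-repulsion machinery for replicator dynamics.
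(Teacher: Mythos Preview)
Your strategy is genuinely different from the paper's. The paper does not attempt a direct boundary-repulsion analysis via Floquet exponents, Butler--McGehee, or an average Lyapunov function; instead it imports two external results: Sela's theorem that fictitious play converges in every $2\times 3$ game, and the Hofbauer--Sorin--Viossat correspondence between best-reply limits and time-averages of replicator orbits. From these it concludes that any interior replicator orbit converges to either an equilibrium or a periodic orbit, and then invokes the classical fact that the time-average of a replicator periodic orbit is itself an equilibrium, together with the absence of interior equilibria, to eliminate the periodic option. The transverse eigenvalues at the face equilibria in $S_{y_2}$ and $S_{y_3}$ (both governed by the sign of $Q$) are the only local computations the paper performs; your hardest step --- controlling the transverse Floquet exponent along the entire Hamiltonian family in $S_{y_2}$ --- never appears, and is simply bypassed. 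The trade-off is that your route, if it could be completed, would be self-contained, whereas the paper's is short but leans on nontrivial cited theorems.

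Your proposal has a genuine gap at the very first move. The assertion that ``every $\omega$-limit of an interior orbit must lie on $\partial S$ since the interior is equilibrium-free'' is a two-dimensional Poincar\'e--Bendixson reflex; in the three-dimensional prism $S$ it does not follow. An interior periodic orbit, or a more complicated recurrent set, is not excluded merely by the absence of rest points, and your final packaging inherits this: if $\omega(p)$ lies entirely in the interior of $S$ then $\omega(p)\cap\partial S=\emptyset$, and connectedness gives you nothing. This is exactly the hole the paper plugs by citing Sela and Hofbauer--Sorin--Viossat. You could partially repair the argument with the standard bimatrix fact that the time-average along any interior periodic replicator orbit is a Nash equilibrium (hence no interior periodic orbits here), but you would still owe an argument excluding non-periodic interior $\omega$-limits in a three-dimensional flow, and at that point you are essentially re-deriving what the paper borrows. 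Once that ingredient is in place, your Floquet/average-Lyapunov programme for the periodic family in $S_{y_2}$ becomes superfluous: convergence is already to an equilibrium or a periodic orbit, the latter is ruled out, and the only local check needed is the transverse sign at the face equilibria --- precisely your $Q<0$ computation.
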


\begin{proof}
Under the hypotheses above, the only locally asymptotically stable equilibrium for the dynamics of \eqref{eq:ODE} is $(A_2,B_2)$. 
We claim that none of the flow invariant planes is attracting. Using Appendix~\ref{app:Jacobian}, the equilibrium in the interior of $S_{y_2}$, when it exists, is repelling towards the interior of $S$ if and only if the eigenvalue in the direction of $(0,0,1)$ is positive, that is, $Q<0$. The equilibrium in the interior of $S_{y_3}$ is a saddle in $S_{y_3}$. It is repelling towards the interior of $S$ if the third eigenvalue is positive. This occurs when the determinant of the Jacobian matrix at this equilibrium is negative, that is, $Q<0$. Therefore $Q<0$ ensures that initial conditions, near the interior equilibria that may exist in some of the invariant planes, move to the interior of $S$.

It now suffices to show that the $\omega$-limit set of an initial condition in the interior of $S$ cannot be a more complicated invariant set in the interior of $S$.

Theorem A in Sela \cite{Sela} guarantees convergence to equilibrium in $2 \times 3$ games under fictitious play. Using the results of Hofbauer {\em et al.} \cite{HSV}, under the equivalent replicator dynamics there is convergence to an invariant object whose time-average is an equilibrium. Thus, under replicator dynamics in a $2 \times 3$ game solutions converge to either a periodic orbit or an equilibrium. 

Suppose there exists an $\omega$-limit set of an interior point of $S$ that is a periodic orbit. Then the time-average of this periodic orbit is an equilibrium. This is impossible since there are no equilibria in the interior of $S$ (see Appendix~\ref{app:equilibria}) for the dynamics of \eqref{eq:ODE}, finishing the proof.
\end{proof}

If $b_{21}<b_{23}$ then no equilibrium exists in the interior of $S_{y_2}$. In this case, (A1) and (A2) ensure that $Q<0$. Otherwise, $Q$ may have either sign depending on the relative magnitude of the deviations among the payoffs. We have $Q<0$ in case $b_{21}>b_{23}$ when 
\begin{equation}\label{eq:deviations}
\frac{b_{21}-b_{22}}{b_{23}-b_{21}} > \frac{b_{11}-b_{12}}{b_{13}-b_{11}}
\end{equation}
meaning that the ratio of the deviation among payoffs for player $B$ is bigger when player $A$ is using $A_2$ than when player $A$ is using $A_1$. 

\subsection{Illustrative examples}\label{subsec:examples}

We illustrate our results with the Stag Hunt and Battle of the Sexes games presented at the beginning of Section~\ref{sec:intro}.

\paragraph{Stag Hunt:}
Suppose that a third option is presented to player $B$ for them to hunt. Let us say this is Partridge. Leaving the payoffs for the choices of Hare and Stag unchanged, assume generic payoffs for the choice of Partridge by player $B$ for both players as in the beginning of Section~\ref{sec:CoordinationGames}.

Assumption (A) requires that $10 > b_{23}$, $b_{13} > 5$ and $a_{23} > a_{13}$. This means that player $B$ prefers Partridge to Hare but not to Stag, when player $A$ makes the same choice. The fact that player $B$ values the new action more than one of the original actions makes the use of the new action credible in the eyes of player $A$. On the other hand, player $A$ is happier having attempted to hunt Stag than Hare. Although this may seem implausible since player $A$ cannot successfully hunt Stag on their own, it may be that they get some satisfaction out of, or compensation from, the result of player $B$. This may be the case if a planner were involved.\footnote{Anchors and nudges as introduced by Kahneman \cite{Kah2011} and Thaler \cite{Tha2008}, respectively, may serve this purpose. Although this opens an interesting further path for research, it is beyond the scope of the present article.}

We can calculate $Q$ to find $Q=42-2b_{13}-4b_{23}$.  Using Theorem~\ref{th:choice}, gives a range of values for $b_{13}$ and $b_{23}$ to determine the outcome of the game. If $8 < b_{23} < 10$ then $Q<0$ is satisfied for any value of $b_{13}$ satisfying Assumption (A). The inequality $b_{23} > b_{21}$ reinforces the credibility of the use of the new action. If $b_{23} < 8$ then $Q<0$ becomes binding.

\paragraph{Battle of the Sexes:}
Suppose that player $B$ has found an interesting film and can choose Cinema. Again the payoffs for the original action remain unchanged.

Assumption (A) means that player $B$ is happier choosing Cinema alone over Opera alone, but not choosing Opera when player $A$ does the same. Player $A$ gets less satisfaction when choosing Football than Opera when player $B$ chooses Cinema. Some external intervention may introduce a compensation to achieve this.

We find that $Q = 70-10b_{13}-7b_{23}$. This condition is binding if $b_{23}<0$.

\section{Concluding remarks}

We address the issue of attaining coordination at a preferred Nash equilibrium in a $2\times 2$ game. We show that coordination on the desired outcome can be achieved by embedding the original game in a modified game where one of the players can use an additional action. 
Thus, by preventing the least desired Nash equilibrium from being chosen, we address the problem of coordination failure. Miscoordination does not occur since neither of the mixed equilibria is stable in the modified game.

The learning dynamics are of replicator type and describe the players' choices as time goes by. Our Assumption (A) is mild and realistic in practice since it is an open condition. Another level of learning is introduced in the modification of the $2\times 2$ game: we assume that one of the players can learn to use a new action. This is realistic in that agents not only improve the use of their skills but acquire new ones. This new action can become available through innovation and is learnt by one of the players. That the newly developed action does not become available to both players is also realistic in noncooperative games: the development of new technologies by firms or countries is usually a well-kept secret. However, the existence of  a technological improvement is frequently leaked and its availability for one of the players becomes common knowledge.

The purpose of the third action is not to be used, although it is possible to find payoff values that make using the new action the only Nash equilibrium of the modified game. Instead, the existence of the additional action can be seen as an anchor, in the sense of Kahneman \cite{Kah2011} or a nudge, in the sense of Thaler \cite{Tha2008}, promoting the choice of the optimal Nash equilibrium of the original coordination game. Desirable anchors or nudges can be created by a planner by making an action available to one but not both players (e.g., a planner can make a route available to bicycles but not to cars). 

If both players were to learn to use the new action then we fall into a $3\times 3$ coordination game with the persistence of the coordination problem. See Lahkar and Seymour \cite{LahSey2013} or Arigapudi \cite{Ari2022} showing that the lack of coordination persists. The additional action can thus be seen to work as an advantage for the player who can use it or threat for the player who cannot.

Regardless of the possible interpretations of the existence of an additional action for just one of the players, this article provides a novel approach to achieving coordination that hopefully can be tested in experiments.


\paragraph{Acknowledgements:} The author was partially supported by CMUP, member of LASI, which is financed by national funds through FCT Funda\c{c}\~ao para a Ci\^encia e a Tecnologia, I.P., under the projects UIDB/00144/2020 and UIDP/00144/2020. I am grateful to J.\ Correia-da-Silva and to P.\ Gothen for their comments on a preliminary version. The present exposition of my ideas benefitted also from the comments from R.\ Tierney, R.\ Treibich and H.\ Zhang during the Microeconomics Group Seminar at SDU, Denmark. This visit was supported by an Erasmus+ Staff Training Mobility Grant.


\appendix

\section{Equilibria for the dynamics} \label{app:equilibria}

Given the ODEs \eqref{eq:ODE} it is easy to see that the corners of the state space $S$ are equilibria. These are 
$(x,y_1,y_2)=(0,0,0)$, $(x,y_1,y_2)=(1,0,0)$, $(x,y_1,y_2)=(0,1,0)$,  $(x,y_1,y_2)=(0,0,1)$, $(x,y_1,y_2)=(1,1,0)$, and 
$(x,y_1,y_2)=(1,0,1)$. We look for equilibria on the faces of $S$ by solving the following systems of equations:

\paragraph{$\bullet$ when $x=0$ and $y_1y_2 \neq 0$:}
\[
\begin{cases}
b_{2123}(1-y_1) - b_{2223}y_2=0 \\
b_{2223}(1-y_2) - b_{2123}y_2=0
\end{cases}
\Leftrightarrow
\begin{cases}
b_{21}-b_{23} - b_{22} + b_{23}=0 \\
b_{2223}(1-y_2) - b_{2123}y_2=0
\end{cases}
\]
which has no solution since $b_{22}-b_{21}>0$.

\paragraph{$\bullet$ when $x=1$ and $y_1y_2 \neq 0$:}
\[
\begin{cases}
b_{1113}(1-y_1) - b_{1213}y_2=0 \\
b_{1213}(1-y_2) - b_{1113}y_2=0
\end{cases}
\Leftrightarrow
\begin{cases}
b_{11}-b_{13} - b_{12} + b_{13}=0 \\
b_{1213}(1-y_2) - b_{1113}y_2=0
\end{cases}
\]
which has no solution since $b_{11}-b_{12}>0$.

\paragraph{$\bullet$ when $y_1=0$ and $xy_2 \neq 0$:}
\[
\begin{cases}
a_{1122}y_2 + a_{1323}(1-y_2)=0 \\
b_{2223}x + b_{1213}(1-x)=0
\end{cases}
\Leftrightarrow
\begin{cases}
y_2=\dfrac{a_{1323}}{a_{1323}+a_{2212}} \\
\mbox{} \\
x=\dfrac{b_{2223}}{b_{2223}+b_{1312}}
\end{cases}
\]
These coordinates correspond to a point in state space $S$ if and only if both $a_{1323}a_{2212}>0$ and $b_{2223}b_{1312}>0$. Given that $a_{2212}>0$ the first inequality reduces to $a_{1323}>0$.

\paragraph{$\bullet$ when $y_2=0$ and $xy_1 \neq 0$:}
\[
\begin{cases}
a_{1121}y_1 + a_{1323}(1-y_1)=0 \\
b_{1113}x + b_{2123}(1-x)=0
\end{cases}
\Leftrightarrow
\begin{cases}
y_1=\dfrac{a_{1323}}{a_{1323}+a_{2111}} \\
\mbox{} \\
x=\dfrac{b_{2123}}{b_{2123}+b_{1311}}
\end{cases}
\]
These coordinates correspond to a point in state space $S$ if and only if both $a_{1323}a_{2111}>0$ and $b_{2123}b_{1311}>0$. Since $a_{2111}<0$ the first inequality is equivalent to $a_{1323}<0$. 

Hence, only one of the two latter equilibria occurs at a time. It can be that neither equilibrium exists.

\paragraph{$\bullet$ when $xy_1y_2 \neq 0$:}
from the last two equations in \eqref{eq:ODE}, we obtain
\begin{align*}
&
\begin{cases}
\left[ b_{2123} - (b_{2123}+b_{1311})x \right](1-y_1) - \left[ b_{2223}- (b_{2223}+b_{1312})x \right]y_2=0 \\
\left[ (b_{2223} - (b_{2223} +b_{1312})x \right](1-y_2) - \left[ (b_{2123} - (b_{2123}+b_{1311})x \right]y_1=0
\end{cases}
\Leftrightarrow \\
\Leftrightarrow &
\begin{cases}
x=p=\dfrac{b_{2221}}{b_{2221}+b_{1112}} \\
\mbox{} \\
y_2=1-y_1
\end{cases}.
\end{align*}
Using the first equation in \eqref{eq:ODE}, we obtain 
$$
y_1=q=\dfrac{a_{2212}}{a_{2212}+a_{1121}}.
$$
Therefore, $y_3=0$ and there are no equilibria in the interior of $S$.

\section{The Jacobian matrix at the equilibria} \label{app:Jacobian}

The entries of the Jacobian matrix for system \eqref{eq:ODE} are as follows
\begin{align*}
J_{11} & = (1-2x)\left[ a_{1121}y_1 + a_{1222}y_2 + a_{1323}(1-y_1-y_2) \right] \\
J_{12} & = x(1-x)(a_{1121} - a_{1323}) \\
J_{13} & = x(1-x)(a_{1222} - a_{1323}) \\
\mbox{}& \\
J_{21} & = y_1\left[ (b_{1113} - b_{2123})(1-y_1) - (b_{1213} - b_{2223})y_2 \right] \\
J_{22} & = (1-2y_1)\left[ b_{2123}(1-x) + b_{1113}x \right] - y_2\left[ b_{2223}(1-x) + b_{1213}x \right] \\
J_{23} & = -y_1\left[ b_{2223}(1-x) + b_{1213}x \right] \\
\mbox{}& \\
J_{31} & = y_2\left[ (b_{1213} - b_{2223})(1-y_2) -(b_{1113} - b_{2123})y_1 \right] \\
J_{32} & = -y_2\left[ b_{2123}(1-x) + b_{1113}x \right] \\
J_{33} & = (1-2y_2)\left[ b_{2223}(1-x) + b_{1213}x \right] -y_1\left[ b_{2123}(1-x) + b_{1113}x \right]
\end{align*}

Replacing the coordinates by their value at each equilibrium, we obtain
\begin{align*}
J_{(0,0,0)=(A_2,B_3)}& = \begin{bmatrix}
a_{1323} & 0 & 0 \\[0.1cm]
0 & b_{2123} & 0 \\[0.1cm]
0 & 0 & b_{2223}
\end{bmatrix}, 
& J_{(1,0,0)=(A_1,B_3)}& = \begin{bmatrix}
-a_{1323} & 0 & 0 \\[0.1cm]
0 & b_{1113} & 0 \\[0.1cm]
0 & 0 & b_{1213}  
\end{bmatrix}, \\[0.4cm]
J_{(0,1,0)=(A_2,B_1)}& = \begin{bmatrix}
a_{1121} & 0 & 0 \\[0.1cm]
0 & -b_{2123} & -b_{2223} \\[0.1cm]
0 & 0 & b_{2221}
\end{bmatrix},
&  J_{(0,0,1)=(A_2,B_2)}& = \begin{bmatrix}
a_{1222} & 0 & 0 \\[0.1cm]
0 & b_{2122} & 0 \\[0.1cm]
0 & -b_{2123} & -b_{2223}
\end{bmatrix}, \\[0.4cm]
J_{(1,1,0)=(A_1,B_1)}& = \begin{bmatrix}
-a_{1121} & 0 & 0 \\[0.1cm]
0 & -b_{1113} & -b_{1213} \\[0.1cm]
0 & 0 & b_{1211}
\end{bmatrix},
&  J_{(1,0,1)=(A_1,B_2)}& = \begin{bmatrix}
-a_{1222} & 0 & 0 \\[0.1cm]
0 & b_{1112} & 0 \\[0.1cm]
0 & -b_{1113} & -b_{1213}
\end{bmatrix}.
\end{align*}
All the matrices are either diagonal or triangular so that the eigenvalues can be read off the diagonal. In all cases, the eigenvector that is not a coordinate vector is $(0,1,-1)$.

At the equilibrium in the interior of $S_{y_1}$ we have
$$
J_{y_1} = \begin{bmatrix}
0 & J_{12} & J_{13} \\[0.1cm]
0 & J_{22} & 0 \\[0.1cm]
J_{31} & J_{32} & 0  
\end{bmatrix}.
$$
The eigenvalue corresponding to the direction of $(0,1,0)$, orthogonal to $S_{y_1}$ is $J_{22}$. This is
$$
J_{22}=\frac{b_{2123}b_{1312}+b_{1113}b_{2223}}{b_{2223}+b_{1312}}
$$
whose sign is that of either $Q=b_{21}b_{13}-b_{21}b_{12}+b_{23}b_{12}+b_{11}b_{22}-b_{11}b_{23}-b_{13}b_{22}$ or its symmetric, depending on the sign of $b_{2223}+b_{1312}$. We can write $Q$ as deviations for player B as follows
$$
Q= (b_{22}-b_{21})(b_{11}-b_{13})+(b_{11}-b_{12})(b_{21}-b_{23}).
$$
The eigenvalues whose eigenvectors are in $S_{y_1}$ are those of
$$
\begin{bmatrix}
0 &  J_{13} \\[0.1cm]
J_{31}  & 0  
\end{bmatrix}.
$$
This matrix has zero trace and the sign of its determinant is that of $-(b_{1312}+b_{2223})$. 

\medskip

At the equilibrium in the interior of $S_{y_2}$ we have
$$
J_{y_2} = \begin{bmatrix}
0 & J_{12} & J_{13} \\[0.1cm]
J_{21} & 0 & J_{23} \\[0.1cm]
0 & 0 & J_{33}   
\end{bmatrix}.
$$
The eigenvalue corresponding to the direction of $(0,0,1)$, orthogonal to $S_{y_2}$ is $J_{33}$, where
$$
J_{33} = -\frac{y_1 Q}{b_{2223}+b_{1312}} = -\frac{a_{1323}}{a_{1323}+a_{2111}} \frac{Q}{b_{2223}+b_{1312}}.
$$
It is clear that $J_{22}$ and $J_{33}$ have opposite signs.
The eigenvalues whose eigenvectors are in $S_{y_2}$ are those of
$$
\begin{bmatrix}
0 &  J_{12} \\[0.1cm]
J_{21}  & 0  
\end{bmatrix}.
$$
This matrix has zero trace and the sign of its determinant is that of $b_{1312}+b_{2223}$. 

\medskip

At the equilibrium in the interior of $S_{y_3}$ we have
\begin{equation}\label{eq:Jy}
J_{y_3} = \begin{bmatrix}
0 & (a_{1121}-a_{1323})p(1-p) & (a_{1222}-a_{1323})p(1-p) \\[0.1cm]
(b_{1112}+b_{2221})(1-q)q & -\beta q & -\alpha q  \\[0.1cm]
-(b_{1112}+b_{2221})(1-q)q & -\beta (1-q) & -\alpha (1-q)
\end{bmatrix},
\end{equation}
where 
\begin{align*}
\alpha = b_{1213}p + b_{2223}(1-p) \\
\beta = b_{1113}p + b_{2123}(1-p).
\end{align*}
Note that $\alpha -\beta = 0$. The determinant of $J_{y_3}$ has sign equal to that of $Q$.

\begin{thebibliography}{99}
\bibitem{AlbKub2018}
F.\ Albrecht and S.\ Kube (2018)
Peer-Punishment in a Cooperation and a Coordination Game,
\emph{Games} {\bf 9}, 54.

\bibitem{Ari2022}
S.\ Arigapudi (2022)
Exit from equilibrium in coordination games under probit choice,
\emph{Games and Economic Behavior} {\bf 122}, 168--202.

\bibitem{Battalio_etal2001}
R.\ Battalio, L.\ Samuelson, and J.\ Van Huyck (2001)
Optimization Incentives and Coordination Failure in Laboratory Stag Hunt Games,
\emph{Econometrica} {\bf 69} (3), 749--764.

\bibitem{BluOrt2007}
A.\ Blume, A.\ Ortmann (2007)
The effects of costless pre-play communication: Experimental evidence from games with Pareto-ranked equilibria,
\emph{Journal of Economic Theory} {\bf 132}, 274--290.

\bibitem{BraCoo2006}
J.\ Brandts and D.J.\ Cooper (2006)
A Change Would Do You Good... An Experimental Study on How to Overcome Coordination Failure in Organizations,
\emph{The American Economic Review} {\bf 96} (3), 669--693.


\bibitem{Chmura_etal2005}
T.\ Chmura, S.\ Kube, T.\ Pitz, and C.\ Puppe (2005)
Testing (beliefs about) social preferences: Evidence from an experimental coordination game,
\emph{Economics Letters} {\bf 88}, 214--220.

\bibitem{Cooper_etal1992} 
R.\ Cooper, D.V.\ DeJong, R.\ Forsythe, and T.W.\ Ross (1992)
Communication in Coordination Games, 
\emph{The Quarterly Journal of Economics} {\bf 107} (2), 739--771.

\bibitem{DalBo_etal2021}
P.\ Dal B\'o, G.R.\ Fr\'echette, and J.\ Kim (2021)
The determinants of efficient behavior in coordination games,
\emph{Games and Economic Behavior} {\bf 130}, 352--368.

\bibitem{DugSha2018}
S.\ Dugar and Q.\ Shahriar (2018)
Restricted and free-form cheap-talk and the scope for efficient coordination,
\emph{Games and Economic Behavior} {\bf 109}, 294--310.

\bibitem{EllFud2000}
G.\ Ellison and D.\ Fudenberg (2000)
Learning purified mixed equilibria,
\emph{Journal of Economic Theory} {\bf 90}, 84--115.

\bibitem{Flesch_etal2011}
J.\ Flesch, G.\ Schoenmakers, and O.J.\ Vrieze (2011)
Loss of skills in coordination games,
\emph{International Journal of Game Theory} {\bf 40}, 769--789.

\bibitem{FudKre1993}
D.\ Fudenberg and D.M.\ Kreps (1993)
Learning mixed equilibria,
\emph{Games and Economic Behavior} {\bf 5}, 320--367.

%

\bibitem{HeWu2020}
S.\ He and J.\ Wu (2020)
Compromise and coordination: An experimental study,
\emph{Games and Economic Behavior} {\bf 119}, 216--233.

\bibitem{Heifetz_etal2013}
A.\ Heifetz, M.\ Meier, and B.C.\ Schipper (2013)
\emph{Games and Economic Behavior} {\bf 81}, 50--68.

\bibitem{Hof1996}
{J.\ Hofbauer} (1996)
Evolutionary dynamics for bimatrix games: a Hamiltonian system?,
\emph{J.\ Math.\ Biol.}, {\bf 34}, 675 -- 688.


%
 
 \bibitem{HSV}
J.\ Hofbauer, S.\ Sorin and Y.\ Viossat (2009)
Time average replicator and best reply dynamics, 
\emph{Math.\ Operations Res.\ } {\bf 10} (2), 263 -- 269.

\bibitem{Jin_etal2023}
Y.\ Jin, Z.\ Zhou, and A.\ Brandenburger (2023)
Coordination via delay: Theory and experiment,
\emph{Games and Economic Behavior} {\bf 137}, 23--49.

\bibitem{Kah2011}
D.\ Kahneman (2011)
\emph{Thinking, fast and slow},
Penguin Books

\bibitem{Kets_etal2022}
W.\ Kets, W.\ Kager, and A.\ Sandroni (2022)
The value of a coordination game,
\emph{Journal of Economic Theory} {\bf 201}, 105419.

\bibitem{Kuang_etal2007}
X.\ Kuang, R.A.\ Weber, and J.\ Dana (2007)
How effective is advice from interested parties? An experimental test using a pure coordination game,
\emph{Journal of Economic Behavior \& Organization} {\bf 62}, 591--604.

\bibitem{LahSey2013}
R.\ Lahkar and R.M.\ Seymour (2013)
Reinforcement learning in population games,
\emph{Games and Economic Behavior} {\bf 80}, 10--38.


\bibitem{Myatt_etal2002}
D.P. Myatt, H.S.\ Shin, and C.\ Wallace (2002)
The assessment: ames and coordination,
\emph{Oxford Review of Economic Policy} {\bf 18} (4), 397--417.

%

\bibitem{RieWei2008}
T.\ Riechmann and J.\ Weimann (2008)
Competition as a coordination device: Experimental evidence from a minimum effort coordination game,
\emph{European Journal of Political Economy} {\bf 24}, 437--454.
 
 \bibitem{Sela}
 A.\ Sela (2000)
 Fictitious Play in $2 \times 3$ Games,
 {\em Games and Economic Behavior} {\bf 31}, 152 -- 162.
 
 \bibitem{Tha2008}
 R.H.\ Thaler and C.R. Sunstein (2008)
 \emph{Nudge: improving decisions about health, wealth and happiness},
 Yale University Press
 

\bibitem{VanHuyck_etal1990} 
J.B.\ Van Huyck, R.C.\ Battalio, and R.O. Beil (1990)
Tacit Coordination Games, Strategic Uncertainty, and Coordination Failure,
\emph{The American Economic Review} {\bf 80} (1), 234--248.

\bibitem{Wei2010}
S.\ Weidenholzer (2010)
Coordination Games and Local Interactions: A Survey of the Game Theoretic Literature, 
\emph{Games} {\bf 1}, 551--585.


\bibitem{ZhaHof2016}
B.\ Zhang and J.\ Hofbauer (2016)
Quantal response methods for equilibrium selection in $2 \times 2$ coordination games,
\emph{Games and Economic Behavior} {\bf 97}, 19 -- 31.

\end{thebibliography}
\end{document}